\PassOptionsToPackage{dvipsnames}{xcolor}
\PassOptionsToPackage{letterpaper}{geometry}
\PassOptionsToPackage{a4paper, bottom=0cm}{geometry}
\pdfoutput=1
\PassOptionsToPackage{
pdfencoding=auto,
pdfnewwindow=true,
pdfusetitle=false,
bookmarks=true,
bookmarksnumbered=true,
bookmarksopen=true,
pdfpagemode=UseThumbs,
bookmarksopenlevel=1,
pdfpagelabels=false,
breaklinks=true,  
backref=false,
colorlinks=true,
}{hyperref}
\documentclass[aps,prx,11pt,letterpaper,twocolumn,margin=0.1cm,bottom=0.1cm,nofootinbib,tightenlines,longbibliography,superscriptaddress,amsfonts,amssymb,amsmath,nobibnotes,floatfix]{revtex4-2}
\pdfoutput=1
\usepackage[T1]{fontenc}
\usepackage[utf8]{inputenx}

\usepackage{amsthm,thmtools}
\usepackage[dvipsnames]{xcolor}
\usepackage{fdsymbol}
\usepackage{xcolor}
\usepackage[most]{tcolorbox}
\usepackage{graphicx} 
\definecolor{googleblue}{RGB}{34, 0, 204}
\definecolor{panblue}{RGB}{0,24,150}
\definecolor{carmine}{RGB}{150, 0, 24}
\usepackage{hyperref} 
\hypersetup{
unicode=true,
bookmarksnumbered=false,
bookmarksopen=false,
breaklinks=false,
colorlinks=true,
linkcolor=carmine,
citecolor=googleblue,
urlcolor=panblue,
anchorcolor=OliveGreen}
\usepackage{pifont}
\usepackage{mathtools}
\DeclarePairedDelimiter{\ev}{\langle}{\rangle}
\usepackage{comment}
\usepackage[page,title]{appendix}
\usepackage[flushleft, neveradjust]{paralist}

\newtheorem{proposition}{Proposition}

\usepackage[caption=false]{subfig}
\captionsetup[subfigure]{subrefformat=simple,labelformat=simple,listofformat=subsimple}
\usepackage{placeins}
\usepackage{microtype} 
\microtypecontext{spacing=nonfrench}
\microtypesetup{
expansion={true,nocompatibility},
protrusion={true,nocompatibility},
activate={true,nocompatibility},
tracking=false,
kerning=true,
spacing={true}
}

\usepackage[all=normal,floats=tight,mathspacing=tight,wordspacing=tight,paragraphs=normal,tracking=tight,charwidths=tight,mathdisplays=normal,sections=normal,margins=normal]{savetrees}

\usepackage{newtxtext}

\begin{document}
\title{On the Significance of Intermediate Latents: \\Distinguishing Quantum Causal Scenarios with Indistinguishable Classical Analogs}
\date{\today}
\author{Daniel Centeno}
\email{dcentenodiaz@perimeterinstitute.ca}
\affiliation{Perimeter Institute for Theoretical Physics, Waterloo, Ontario, Canada, N2L 2Y5}
\affiliation{Department of Physics and Astronomy, University of Waterloo, Waterloo, Ontario, Canada, N2L 3G1}
\author{Elie Wolfe}
\email{ewolfe@perimeterinstitute.ca}
\affiliation{Perimeter Institute for Theoretical Physics, Waterloo, Ontario, Canada, N2L 2Y5}
\affiliation{Department of Physics and Astronomy, University of Waterloo, Waterloo, Ontario, Canada, N2L 3G1}

\begin{abstract}

The use of graphical models to represent causal hypotheses has enabled revolutionary progress in the study of the foundations of quantum theory. Here we consider directed acyclic graphs each of which contains both nodes representing observed variables as well as nodes representing latent or hidden variables. 
When comparing distinct causal structure, a natural question to ask is if they can explain distinct sets of observable distributions or not. Statisticians have developed a great variety of tools for resolving such questions under the assumption that latent nodes be interpreted classically. Here we highlight how the change to a quantum interpretation of the latent nodes induces distinctions between causal scenarios that would be classically indistinguishable.  
We especially concentrate on quantum scenarios containing latent nodes with at least one latent parent, a.k.a. possessing \textit{intermediate latents}. This initial survey demonstrates that many such quantum processes can be operationally distinguished by considerations related to monogamy of nonlocality, especially when computationally aided by a hierarchy of semidefinite relaxations which we tailor for the study of such scenarios. We conclude by clarifying the challenges that prevent the generalization of this work, calling attention to open problems regarding observational (in)equivalence of quantum causal structures with intermediate latents.

\end{abstract}

\maketitle

\section{Introduction}

The field of causal inference examines the relationship between statistical correlations and causal connections among a set of variables \cite{pearl2009causality}. The causal structure is depicted using a \textbf{d}irected \textbf{a}cyclic \textbf{g}raph (DAG), which can include both observed and latent (unobserved) nodes. Recently, the quantum information community has shown interest in this field, recognizing it as a highly precise framework for understanding and studying Bell nonlocality \cite{wood2015lesson}. Physicists have innovated by allowing latent nodes to represent quantum systems. For example, the standard Bell scenario \cite{bell1964einstein} is illustrated by the DAG in Fig.~\ref{Bellfig}, where blue square nodes are observed nodes and the circular node represents a latent node, hence observing a probability distribution $p(a,b|x,y)$.

\begin{figure}[b]
    \centering
    \includegraphics[height = 3.5cm]{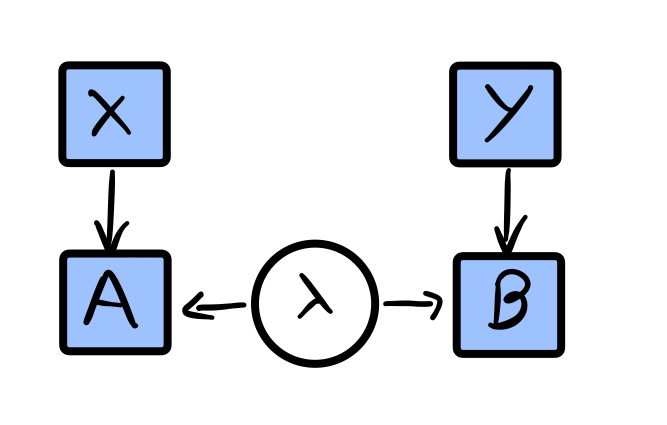}
    \caption{Standard Bell scenario.}
    \label{Bellfig}
\end{figure}

Causal inference techniques were initially developed for causal structures with only classical latent nodes in the context of machine learning. Therefore, causal scenarios where some latent nodes have other latent nodes as parents are overlooked. This omission is due to the existence of an exogenization procedure, introduced by~\citet{evans2018margins}, that relies on the ability to clone classical information, enabling the construction of a DAG that is observationally equivalent\footnote{Two DAGs are observationally equivalent if they produce the same set of possible observed correlations} but includes only exogenous latent nodes (meaning no latent node has latent parents). However, this procedure fails when quantum resources are introduced, as quantum information cannot be cloned \cite{wootters1982single, dieks1982communication, park1970concept}. Furthermore, the procedure fails in any nonclassical theory as the no-broadcasting theorem is generalized \cite{barnum2006cloning}. Consequently, a large family of causal structures involving nonclassical latent nodes remains unexplored. 

Although it is well known that the exogenization procedure does not apply when considering nonclassical theories, in the literature there is not much information regarding the differences between exogenous and non-exogenous versions of causal scenarios. The first concrete example pointing out this discrepancy was given in Ref.~\citep[Fig.~8]{wolfe2021quantum} which contrasts two DAG varieties with seven observed nodes. 
Here, we show that even in scenarios with only four observed nodes we also encounter this distinction. 

Here we partition quantum causal structures into sets such that all structures within a set map to the same structure under exogenization. In Sections III and IV we investigate the set of DAGs all of which are associated with an extended Bell scenario, ultimately proving that every DAG in this set can be distinguished from every other. Some of our proofs rely on a novel efficient adaptation of the Quantum Inflation~\cite{wolfe2021quantum} procedure for the study of DAGs with intermediate quantum latents. In Section V we explore DAGs whose exogenized template matches the quantum network with a tetrahedron-like facial structure. While we are capable of distinguishing many pairs of DAGs in that set, the (in)equivalence between some pairs of DAGs in that set remains unresolved. We conclude with a variety of tantalizing open questions.

\section{Revisiting exogenization}

Let us revisit the exogenization procedure with the goal of pinpointing why it is not valid for nonclassical theories. The exogenization procedure takes any non-exogenous causal scenario and returns an exogenous one which is observationally equivalent (see Fig.~\ref{exogenization_procedure}). It consists of two different steps:

\begin{compactenum}
    \item Add direct causal influences (arrows) from the parents of the intermediate latent node to the children of it (see Fig.~\ref{exogenization_2}).
    \item Eliminate the causal connections from the parent nodes to the intermediate latent node (see Fig.~\ref{exogenization_3}).
\end{compactenum}

\begin{figure*}[t]
\rule{\textwidth}{0.1pt}
     \centering
     \subfloat[][\linebreak Original non-exogenous scenario\label{exogenization_1}]{\includegraphics[height = 2cm]{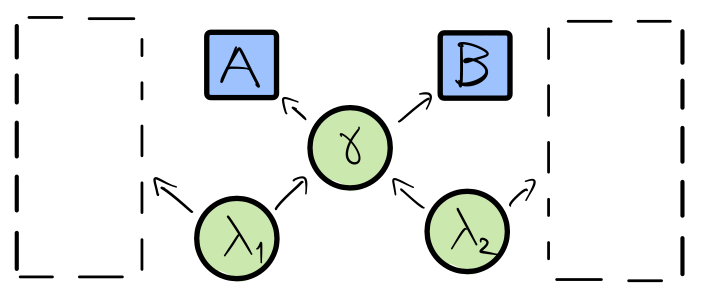}}
     \qquad
     \subfloat[label2][\linebreak Scenario after step one of exogenization\label{exogenization_2}]{\includegraphics[height = 2cm]{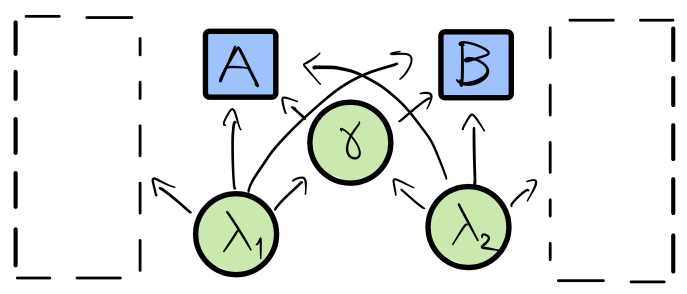}}
     \qquad
     \subfloat[label3][\linebreak Scenario after step two of exogenization\label{exogenization_3}]{\includegraphics[height = 2cm]{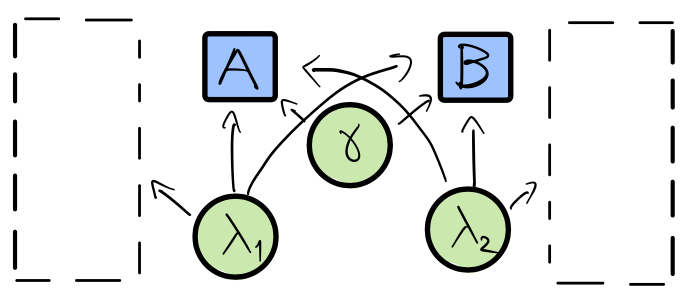}}
        \caption{Example of applying the exogenization procedure. The dashed-line squares represent the existence of more nodes in the causal structure.}
        \label{exogenization_procedure}
        \rule{\textwidth}{0.1pt}
\end{figure*}

Assuming that the intermediate latent node is nonclassical, the first step is valid for all nonclassical theories. The reason is that the nonclassical subsystems going through the added arrows (i.e. directly from the parent of the intermediate node to the children of it) can also 
 go through the nonclassical intermediate node in the scenario without the extra arrows. Therefore, both scenarios (before and after the first step) are observationally equivalent if the intermediate node is nonclassical. (We revisit the case of \emph{classical} intermediate latent nodes in Appendix~\ref{CvsNC}). On the other hand, the second step is not valid for nonclassical theories per the \emph{No Broadcasting} theorem \cite{wootters1982single, dieks1982communication, barnum2006cloning}.

That said, there are at least three special cases in which one \emph{can} exogenize (even when considering nonclassical latent nodes) without changing the set of achievable probability distributions, i.e. the exogenized and non-exogenized cases are equivalent at the observational level. They are:
\begin{compactitem}
    \item The intermediate latent has only one child. This is because we can absorb the function of the intermediate latent into the child.
    \item The intermediate latent has only one parent. This is because we can absorb the function of the intermediate latent into the parent.
    \item The intermediate latent has only classical ancestors. As classical information can be cloned, we are allowed to send the classical information from the ancestors directly to all the children of the intermediate latent. Then, the intermediate latent would send a dictionary that indicates which state the children should use according to the values of the classical ancestors and all the possible states.
\end{compactitem}

\section{Observational difference between exogenized and non-exogenized scenarios with nonclassical sources}
\label{obsdiff}

\begin{figure}[b]
     \centering
     \subfloat[][\linebreak Exogenized scenario\label{noil}]{\includegraphics[height = 3.5cm]{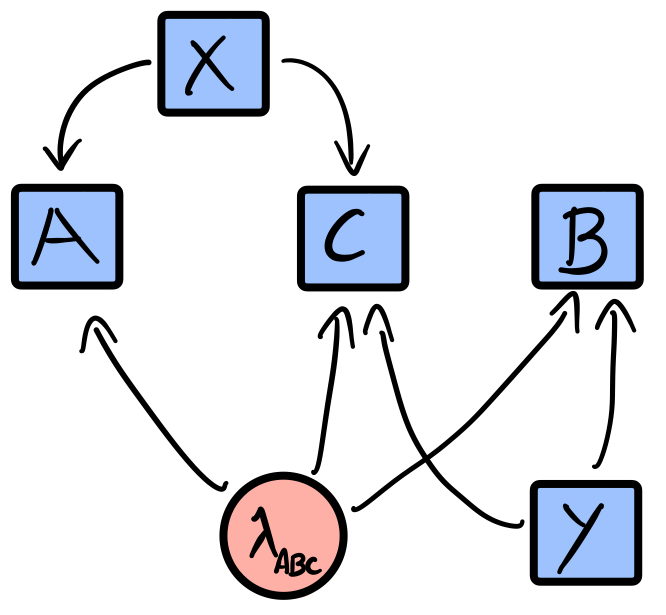}}
     \subfloat[][\linebreak $BC$ intermediate node scenario\label{BCil}]{\includegraphics[height = 3.5cm]{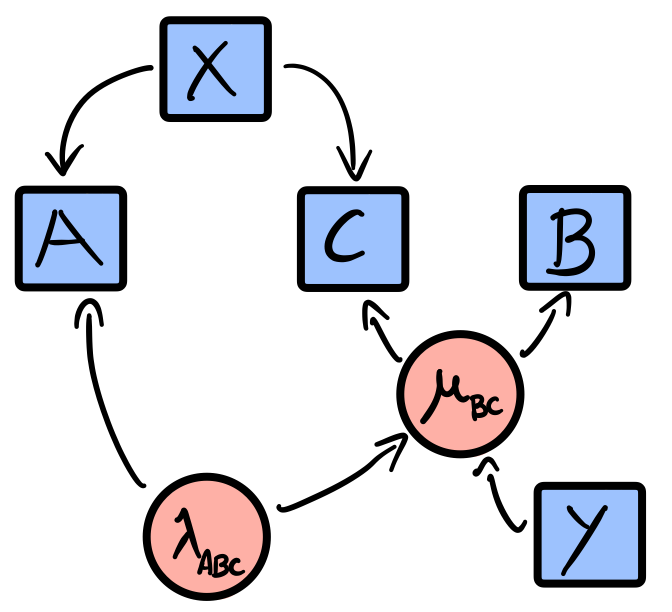}}
        \caption{A latent-exogenous and a latent-non-exogenous variants of an \textbf{extended Bell scenario}.}
        \label{5nodes}
\end{figure}

In this section, we show the observational difference that arises when one intermediate latent node is added to an extended Bell scenario. We start with this simpler scenario to build our intuition before going to the tetrahedron causal structure. The extended Bell scenario is depicted in Fig.~\hyperref[noil]{3a}. It is a tripartite Bell scenario that allows for signalling from the settings of the first two players (Alice and Bob) to the third one (Charlie) who does not perform its own setting. In Fig.~\hyperref[BCil]{3b}, we depict one possible 2-layer version of the DAG which introduces an intermediate latent between Bob and Charlie that receives the setting $y$ to maintain the no-signaling constraints (d-separation relations) while avoiding being a trivial case where the exogenization is possible. We here consider that every observed variable is binary, i.e. $a,b,c,x,y \in \{0,1\}$. In order to make the observational difference explicit, we propose a protocol to achieve a probability distribution in the non-exogenous scenario and prove that it is not achievable in the exogenized scenario via an interrupted graph. This protocol can be easily adapted to compare the quantum set as well as the no-signaling set of probability distributions of both scenarios.

The protocol to define a probability distribution $P(a,b,c,x,y)$ in the non-exogenous scenario is the following. Let the root latent node, $\lambda_{ABC}$, share the resource needed to violate the classical bound of the \textsf{CHSH} inequality (note that any violation is enough), for example a 2-qubit maximally entangled state when discussing quantum theory. Then, the intermediate latent node, $\mu_{BC}$, and $A$ perform the measurements needed to violate the \textsf{CHSH} inequality using $X$ and $Y$ (which are random bits) as the respective settings. Finally, the intermediate latent node, $\mu_{BC}$, sends the result of the measurement to both $B$ and $C$ who will output it while $A$ outputs her result of the measurement. This leads us to obtain
\begin{equation}
    P_{\textsf{2-layer}}(a,b,c|x,y)= P_{\textsf{CHSH}>2}(a,b|x,y)\delta_{b{=}c},
    \label{prob}
\end{equation}
where $ P_{\textsf{CHSH}>2}(a,b|x,y)$ is a certain probability that exceeds the classical bound of the \textsf{CHSH} inequality. For example, if we consider quantum latent nodes, it could be the probability distribution achieving the Tsirelson's bound or if we consider post-quantum latent nodes, it could be a PR-box distribution.

Now, we shall show that the probability distribution $P_{\textsf{2-layer}}(a,b,c|x,y)$ is not achievable in the \hyperref[noil]{exogenized scenario}. In order to do this, we use the interrupted graph shown in Fig.~\ref{intervened} where $B$ and $C$ receive possibly different values of $Y$. 

\begin{figure}[b]
    \centering
    \includegraphics[height = 3.5cm]{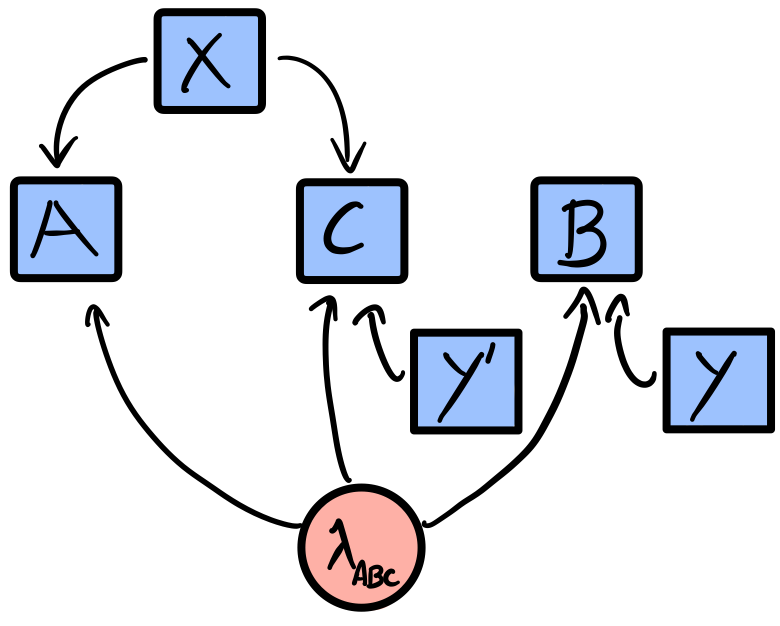}
    \caption{Interrupted graph of the \hyperref[noil]{exogenized scenario} 
    with different settings for $B$ and $C$.}
    \label{intervened}
\end{figure}

The idea of using this interrupted graph is to check whether a probability distribution ($P_{\textsf{2-layer}}(a,b,c|x,y)$ in our case) is feasible or not in the original scenario. This feasibility question is translated into the mathematical problem of finding a probability distribution on the interrupted scenario, $Q(a,b,c|x,y,y')$, that satisfies all the no-signaling constraints coming from it while also meeting the compatibility constraints relative to the original scenario, namely,
\begin{align}\label{eq:feasasoptim}
&\hspace{-3em}\max_{Q(a,b,c|x,y,y')} \quad  1\\\nonumber
\textrm{s.t.} \quad  &Q(b|x,y,y')=Q(b|y,y')\\\nonumber
  &Q(a,c|x,y,y')=Q(a,c|x,y')\\\nonumber
  &Q(a,b|x,y,y')=Q(a,b|x,y)\\\nonumber
  &Q(a,b,c|x,y,y'=y) = P_{\textsf{2-layer}}(a,b,c|x,y)
\end{align}
Using $P_{\textsf{2-layer}}(a,b,c|x,y)$ as per Eq.~\eqref{prob} we find that the ``optimization'' problem of Eq.~\eqref{eq:feasasoptim} is  infeasible. Via Farkas' Lemma~\cite{Vanderbei2014} we obtain \emph{witnesses} of the infeasibility of $P_{\textsf{2-layer}}(a,b,c|x,y)$ in the exogenized scenario, and these witnesses can be regarded as monogamy relations. One such relation is 
\begin{equation}
    \textsf{CHSH}_{(A,B|X,Y)} + 2\ev*{B_Y C_{XY}} \leq 4
    \label{monogamyprob}
\end{equation}
where $\textsf{CHSH}_{(A,B|X,Y)}$ is the usual \textsf{CHSH} between $A$ and $B$ with $X$ and $Y$ as the settings and $\ev*{B_Y C_{XY}}$ is the correlator of $B$ and $C$ for each value of $X$ and $Y$. This monogamy relation shows that having nonlocal correlations between $A$ and $B$ bounds the level of correlation we can have between $B$ and $C$. This concept is not new, it was studied for many Bell scenarios in~\cite{augusiak2014elemental}. However, the novelty presented here is that even if the third player has access to the settings of the other players, he cannot be perfectly correlated with anyone if they are performing a nonlocal protocol. This addition remarks even more the monogamous nature of entanglement and its intrinsic randomness. %
Note that analogous monogamy relations can be obtained changing the correlator $\ev*{B_Y C_{XY}}$ to $\ev*{A_X C_{XY}}$ in Eq.~\eqref{monogamyprob}.
We can use similar techniques to develop novel entropic monogamy inequalities using the Braunstein-Caves inequalities, see Appendix~\ref{entropic_relations} for details.

In short, we have shown that
\begin{proposition}\label{prop:intermediate_latent_at_all}
    The \hyperref[noil]{no intermediate node} and \hyperref[BCil]{$BC$ intermediate node} extended Bell scenarios are observationally distinct, with the set of distributions explained by the former being strictly contained in the set of distributions explained by the latter.
\end{proposition}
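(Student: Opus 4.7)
The plan is to split the proposition into two claims: (i) every distribution realizable in the exogenized scenario of Fig.~\hyperref[noil]{3a} is realizable in the BC-intermediate scenario of Fig.~\hyperref[BCil]{3b}, and (ii) there is a distribution realizable in Fig.~\hyperref[BCil]{3b} but not in Fig.~\hyperref[noil]{3a}.

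For (i), I would argue by direct simulation. Although step 2 of exogenization is invalid for nonclassical latents, step 1 remains valid even nonclassically, as discussed in Sec.~II. Thus Fig.~\hyperref[BCil]{3b} is observationally equivalent to itself augmented by the arrows $Y\to B$ and $Y\to C$, and this enlarged DAG trivially realizes anything Fig.~\hyperref[noil]{3a} does by taking $\mu_{BC}$ to be deterministic (say, constant). The inclusion holds uniformly across classical, quantum, and no-signaling realizations.

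For (ii), I would use $P_{\textsf{2-layer}}$ of Eq.~\eqref{prob} as the witness. Its realizability in Fig.~\hyperref[BCil]{3b} follows from the protocol sketched just above Eq.~\eqref{prob}: $\lambda_{ABC}$ supplies a CHSH-violating bipartite resource, $\mu_{BC}$ uses its share together with the input $y$ to perform Bob's CHSH measurement and broadcasts the classical outcome to both $B$ and $C$, while $A$ measures with setting $x$. To rule out realizability in Fig.~\hyperref[noil]{3a}, I would pass to the interrupted graph of Fig.~\ref{intervened}. Since $Y$ is classical in Fig.~\hyperref[noil]{3a}, any realization of a target $P(a,b,c|x,y)$ there lifts to a distribution $Q(a,b,c|x,y,y')$ on the interrupted graph satisfying the no-signaling constraints of Eq.~\eqref{eq:feasasoptim} together with the compatibility clause $Q(a,b,c|x,y,y'{=}y)=P(a,b,c|x,y)$. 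Feasibility of this linear program with $P=P_{\textsf{2-layer}}$ is thus necessary for realizability in Fig.~\hyperref[noil]{3a}. Solving the (infeasible) program and invoking Farkas' Lemma produces the monogamy inequality of Eq.~\eqref{monogamyprob} as a dual certificate, which $P_{\textsf{2-layer}}$ violates because $\textsf{CHSH}_{(A,B|X,Y)}>2$ by construction and $\ev*{B_Y C_{XY}}=1$ from the $\delta_{b=c}$ factor, giving a total strictly greater than $4$.

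The principal obstacle will be formulating Eq.~\eqref{eq:feasasoptim} correctly: one must verify that the no-signaling constraints listed there truly exhaust the observational content of the exogenized DAG on the interrupted graph, and that the compatibility clause ties $Q$ to $P$ only on the diagonal $y=y'$. Once this is in place the infeasibility reduces to a finite-dimensional linear program, its Farkas dual is the single monogamy inequality of Eq.~\eqref{monogamyprob}, and the witness check reduces to the arithmetic above.
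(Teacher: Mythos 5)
Your proposal is correct and follows essentially the same route as the paper: realize $P_{\textsf{2-layer}}$ via the intermediate-latent protocol, then refute it in the exogenized scenario via the interrupted graph of Fig.~\ref{intervened}, the linear program of Eq.~\eqref{eq:feasasoptim}, and the Farkas/monogamy witness of Eq.~\eqref{monogamyprob}. One small simplification of your final worry: since you only use the interrupted-graph program to certify \emph{in}feasibility, the listed constraints need only be \emph{necessary} consequences of the exogenized DAG, not an exhaustive characterization of it.
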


\section{Versions of the 2-layer scenarios}
\label{versions-section}
Once we have shed light on the observational difference between the 1-layer and 2-layer scenarios there is a natural follow up question. Is there an observational difference as well between different versions of 2-layer scenarios coming from a single 1-layer scenario? In this section we address this question by comparing all the possible 2-layer versions of the \hyperref[noil]{exogenized scenario}. From now on, we focus on quantum latent nodes unless we specify it.

There are three possible 2-layer scenarios coming from the exogenous one. The one previously presented with the \hyperref[BCil]{BC intermediate node}, one in which the intermediate latent node is between $A$ and $C$, Fig.~\hyperref[ACil]{5a}, and the case with both intermediate latent nodes, Fig.~\hyperref[2il]{5b}.

\begin{figure}[b]
     \centering
     \subfloat[label1][\linebreak $AC$ intermediate node scenario\label{ACil}]{\includegraphics[height = 3.5cm]{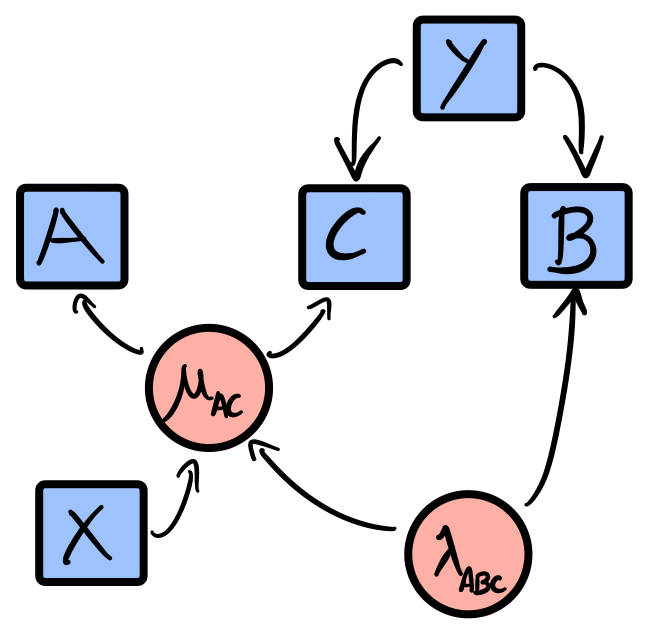}}
     \subfloat[label2][\linebreak 2 intermediate nodes scenario\label{2il}]{\includegraphics[height = 2.8cm]{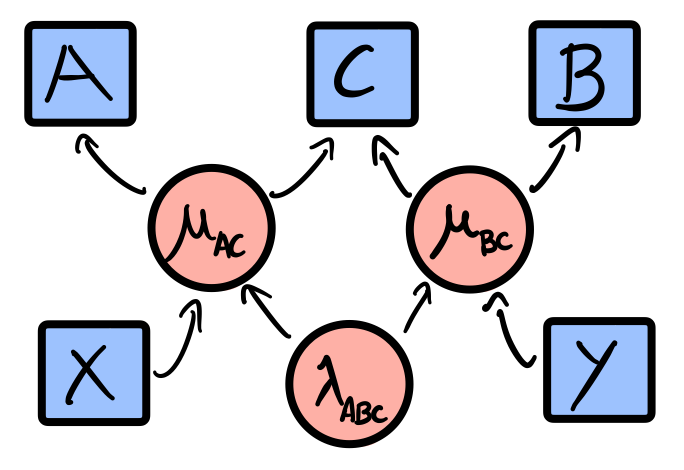}}
        \caption{Additional variants of the extended Bell scenario.}
        \label{2-layer-versions}
\end{figure}

Notice that the scenario which performs both intermediate latent nodes is clearly, at least, as observationally powerful as the other two versions with only one intermediate latent node. This is because one always has the option of not exploiting an intermediate latent node (i.e. it acts trivially) recovering the scenarios with only one intermediate latent node.  

\begin{proposition}\label{prop:intermediate_latent_variants}
    The \hyperref[BCil]{$BC$ intermediate node} and \hyperref[ACil]{$AC$ intermediate node} extended Bell scenarios are observationally distinct and strictly contained in the \hyperref[2il]{2 intermediate nodes} extended Bell scenario.
\end{proposition}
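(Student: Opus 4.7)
The plan is to split the proposition into (i) the containments BC, AC $\subseteq$ 2IL and (ii) mutual non-containment of BC and AC; combined, these immediately give the claimed strict containments. Part (i) follows from the observation already recorded above the proposition statement: in the 2IL scenario, letting either intermediate latent act as a trivial channel reduces the DAG to BC or AC respectively, so every distribution realizable in BC (resp.\ AC) is realizable in 2IL.

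For part (ii), I would exploit the DAG isomorphism BC $\cong$ AC under the relabeling $(A,B,X,Y)\mapsto(B,A,Y,X)$. It therefore suffices to exhibit a distribution realizable in AC but not in BC; the mirrored distribution then witnesses the opposite non-containment. The natural candidate is the mirror of the distribution used in Proposition~\ref{prop:intermediate_latent_at_all}:
\begin{equation}
P^{\mathrm{AC}}_{\textsf{2-layer}}(a,b,c|x,y)=P_{\textsf{CHSH}>2}(a,b|x,y)\,\delta_{a=c},
\end{equation}
which is realized in AC by having $\lambda_{ABC}$ distribute a maximally entangled pair between $\mu_{AC}$ and $B$, letting $\mu_{AC}$ measure its qubit with setting $x$ and broadcast the classical outcome to both $A$ and $C$, and letting $B$ measure its qubit with setting $y$.

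The load-bearing step is certifying that $P^{\mathrm{AC}}_{\textsf{2-layer}}$ is \emph{not} realizable in BC. The structural obstruction is that in BC the intermediate latent $\mu_{BC}$ is fed $y$ rather than $x$, and no directed path from $x$ reaches $C$; hence $C$'s measurement strategy must be $x$-independent. Demanding $a=c$ for every $(x,y)$ then forces $A$'s two measurements (for $x=0$ and $x=1$) to be perfectly correlated with a single common measurement on $C$'s subsystem, and a no-broadcasting/monogamy argument implies this makes $A$'s measurements jointly compatible and thus incapable of violating CHSH on $(A,B|X,Y)$. To turn this intuition into a formal certificate I would follow the interrupted-graph recipe of Eq.~\eqref{eq:feasasoptim}: pose the BC-feasibility problem on a suitably interrupted copy of BC and either (a) extract a monogamy inequality at the no-signaling level via Farkas' Lemma, or (b) invoke the intermediate-latent-adapted Quantum Inflation hierarchy foreshadowed in the introduction. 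The main obstacle is precisely this last point: one must ensure that the monogamy-type obstruction survives quantization, which is where the tailored semidefinite relaxation is expected to earn its keep.
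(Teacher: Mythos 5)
Your skeleton matches the paper's: the containments in the two-intermediate scenario follow from letting an intermediate act trivially, and by the $(A,B,X,Y)\mapsto(B,A,Y,X)$ symmetry it suffices to establish one non-containment with a single witness distribution of the form $P_{\textsf{CHSH}>2}\,\delta_{a=c}$ (the paper works with the mirror-image witness $\delta_{b=c}$ against the $AC$ scenario, which is equivalent). The gap is in the load-bearing step, and it is not merely a matter of detail. First, your structural intuition is wrong on two counts: in the extended Bell scenario $C$ \emph{does} receive $x$ as an observed parent (the correlator in Eq.~\eqref{monogamyprob} is $\langle B_Y C_{XY}\rangle$ precisely because Charlie sees both settings), so his strategy is not $x$-independent; and demanding $a=c$ for all $(x,y)$ does \emph{not} force $A$'s measurements to be compatible or cap \textsf{CHSH} at the classical value $2$. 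The actual quantum bound, obtained in the paper by an NPA-type SDP with noncommutation imposed only between the operators of the two parties sharing the intermediate, is $\textsf{CHSH}+2\langle B_0C_{00}\rangle\le 8/\sqrt{3}$, i.e.\ perfect duplication still permits $\textsf{CHSH}\approx 2.6188>2$. Consequently your witness must be chosen with \textsf{CHSH} value strictly above $8/\sqrt{3}-2$ (e.g.\ the Tsirelson distribution); ``any violation,'' as sufficed for Proposition~\ref{prop:intermediate_latent_at_all}, no longer suffices here.

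Second, your route (a) --- a Farkas'-Lemma certificate at the no-signaling level on an interrupted graph --- provably cannot work. As the paper notes in Appendix~\ref{OPTsec}, the duplication-plus-\textsf{CHSH}-violation task \emph{is} achievable in the ``wrong'' scenario at the OPT/no-signaling level: with a PR box, the intermediate party knows both settings and one output and can therefore compute the other party's output with certainty. So there is no linear no-signaling witness separating the two scenarios; the separation is intrinsically quantum. Only your route (b) is viable, and it is exactly what the paper does: a semidefinite relaxation in which operators of the two parties joined by the intermediate fail to commute when their shared setting differs, yielding the $8/\sqrt{3}$ bound that the $BC$ protocol exceeds at $2\sqrt{2}+2$. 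As written, your proposal leaves the decisive quantitative step unproven and rests it partly on an argument that cannot succeed.
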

\begin{proof}

First we prove that the \hyperref[BCil]{$BC$ intermediate node scenario} is not contained in the \hyperref[ACil]{$AC$ intermediate node scenario}. The proof is computational, using the NPA hierarchy \cite{navascues2008convergent} for noncommutative polynomial optimization. We encode the presence of an $AC$ intermediate latent as noncommutation relations between operators of Alice and Charlie when the setting they share in common --- namely, $X$ --- takes different values. By contrast, in the exogenized case, operators associated with different parties always commute. In our Semidefinite Program for the \hyperref[ACil]{$AC$ intermediate node scenario} we set the objective function to be maximized to be the left-hand side of Eq.~\eqref{monogamyprob}, using $(X,Y)=(0,0)$ for the correlator of $B$ and $C$. Our open-source implementation can be found online~\cite{github_link}, and is accomplished by manually adapting the current version of the \textit{Inflation} package ~\cite{boghiu2023inflation} to capture the new commutation rules. We find that with an $AC$ intermediate latent we have 
\begin{align}\label{eq:monogamy_violation}
\textsf{CHSH}_{(A,B|X,Y)} + 2\ev*{B_0 C_{00}} \leq 8/\sqrt{3} \approx 4.6188
\end{align}
However, we have already seem that the same objective function can readily reach the strictly higher bound $2\sqrt{2}+2 \approx 4.8284$ using a $BC$ intermediate latent. Therefore, there are correlations achievable within the \hyperref[BCil]{$BC$ intermediate node scenario} which are infeasible in the \hyperref[ACil]{$AC$ intermediate node scenario}.

It follows from symmetry, then, that the \hyperref[ACil]{$AC$ intermediate node scenario} is also not contained in \hyperref[BCil]{$BC$ intermediate node scenario}. Finally, note that we can implement both protocols in the \hyperref[2il]{2 intermediate nodes scenario}. As such, both versions with \emph{one} intermediate latent node are observationally strictly contained in the version with \emph{two} intermediate latent nodes.\end{proof}

For completeness, we note that the bound in Eq.~\eqref{eq:monogamy_violation} is tight, as it can be achieved by setting Charlie and Alice to always report the same value, namely that produced by the measurement at the intermediate latent position. This corresponds to maximizing a bipartite payoff function slightly different than \textsf{CHSH}, namely
\begin{align*}
\textsf{CHSH}_{(A,B|X,Y)} + 2\ev*{A_0 B_0} \leq 8/\sqrt{3} \approx 4.6188 \,.
\end{align*}

\section{Tetrahedron}
\subsection{Difference in the 4 party scenario}

Let us now focus on the tetrahedron scenario (see Fig.~\ref{tetrahedron}), which is more complex although it has fewer observed nodes than the previous scenario. In order to show the observational difference between the 1-layer and 2-layer scenarios, we use the 2-layer version of the tetrahedron with the $BC$ intermediate node (see Fig.~\ref{2-layer}). Notice that Fig.~\ref{2-layer} depicts only one possible 2-layer version of the tetrahedron: We could also add distinct and/or additional intermediate latent nodes, up to a maximum of six intermediate latent nodes.

\begin{figure}[b]
    \centering
    \includegraphics[height = 3.5cm]{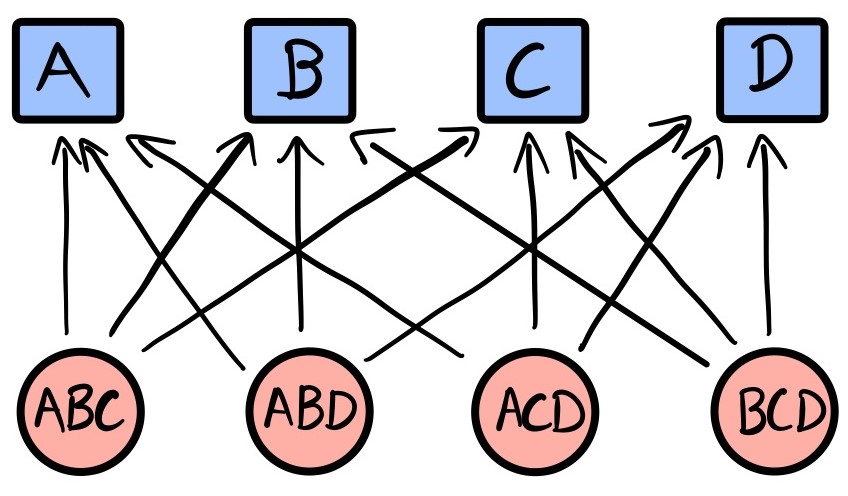}
    \caption{Tetrahedron scenario.}
    \label{tetrahedron}
\end{figure}

\begin{figure}[b]
    \centering
    \includegraphics[height = 3.5cm]{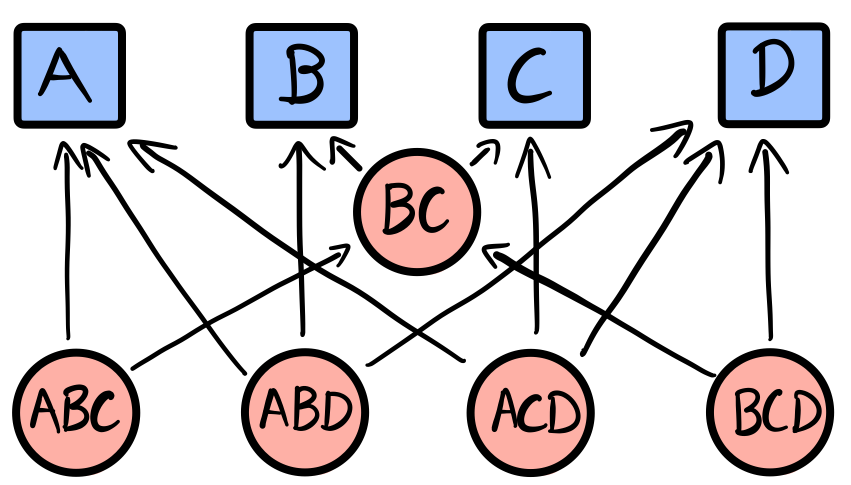}
    \caption{ 2-layer version of the tetrahedron scenario with the $BC$ intermediate latent node.}
    \label{2-layer}
\end{figure}

We follow analogous steps in order to show the observational difference. First, we propose a protocol in order to achieve a certain distribution in the 2-layer scenario and then we prove that it is incompatible with the 1-layer scenario.

The protocol is based on the same idea of making the intermediate latent a player in the \textsf{CHSH} game who, afterwards, send the result to its observed children. Concretely, our protocol needs that the latent node $ABC$ (we could follow an analogous protocol using $BCD$) is the bipartite resource necessary to violate the \textsf{CHSH} inequality and that the sources $ACD$ and $BCD$ are classically correlated mixed states,
\begin{equation}
\begin{array}{cc}
    ACD = & \frac{1}{2}(|000\rangle\langle000| + |111\rangle\langle 111|), \\
    \\
    BCD = & \frac{1}{2}(|00\rangle\langle00| + |11\rangle\langle 11|), 
\end{array}
\end{equation}
which will serve as the settings to play the \textsf{CHSH} game. Then, $A$ and $BC$ first apply the $\{|0\rangle,|1\rangle\}$ measurement to the systems received from $ACD$ and $BCD$ respectively and use the result as a measurement setting to perform the corresponding measurements that violate \textsf{CHSH}. Afterwards, $A$ outputs both results, $a_x,a_o$, as her final outcome and $BC$ forwards both results to $B$ and $C$. $B$ announces
them as his final outcome, $b_y,b_o$ and $C$ measures the system coming $ACD$ in the computational basis and outputs the result along with the bits received from $BC$, $c_x,c_y,c_o$. Finally, $D$ measures the systems received from $ACD$ and $BCD$ in the computational basis as well and outputs both results, $d_x,d_y$, as his final outcome.

This protocol leads us to the following probability distribution 
\begin{align}\label{P_2lt}
    & P_{\textsf{2LT}}(a_{x},a_o,b_{y},b_o,c_{x},c_{y},c_{o},d_{x},d_{y})=\\\nonumber
    &\frac{1}{4} P_{\textsf{CHSH}>2}(a_o,b_o|a_{x},b_{y})\delta_{b_o{=}c_o}\delta_{a_{x}{=}c_{x}{=}d_{x}}
   \delta_{b_{y}{=}c_{y}{=}d_{y}}
\end{align}
where the coefficient $1/4$ comes from $p(a_x,b_y)$. Note that we observe simultaneously a violation of the classical \textsf{CHSH} bound between (the appropriate output bits of) $A$ and $B$ over $\{A_o,B_o|A_x,B_y\}$ and perfect correlation between $B$ and $C$ over $\{B_o,C_o\}$.

Let us now show that this probability distribution is not achievable in the 1-layer version of the \hyperref[tetrahedron]{tetrahedron scenario}. In order to do so, we construct the inflated scenario of the 1-layer tetrahedron depicted in Fig.~\ref{inflation} and proceed with a proof by contradiction.

\begin{figure}[h!]
    \centering
    \includegraphics[height = 2.5cm,width=\columnwidth]{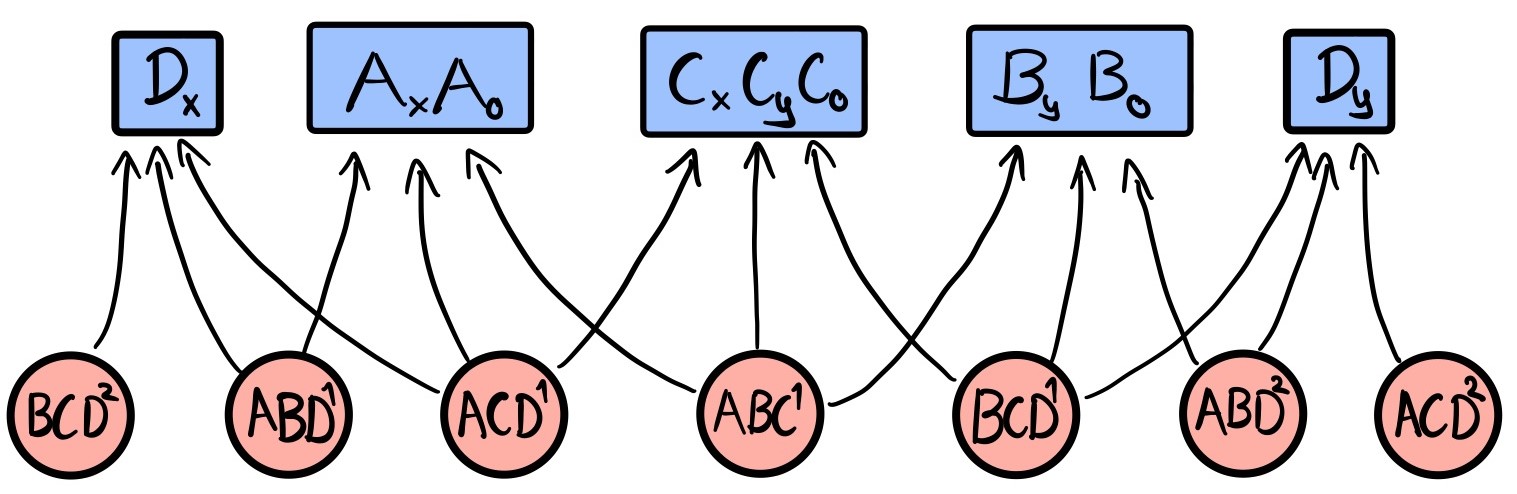}
    \caption{ Inflated scenario of the 1-layer tetrahedron with 2-bits outputs in $A$, $B$ and $D$ and 3-bits output in $C$.}
    \label{inflation}
\end{figure}

We start by assuming that $P_{\textsf{2LT}}$ is feasible in the 1-layer tetrahedron. Then, the goal is to find a probability distribution on the inflated scenario, $Q(a_x,a_o,b_y,b_o,c_x,c_y,c_o,d_x,d_y)$, such that it satisfies all the constraints coming from the inflated causal structure as well as the compatibility constraints regarding the original scenario.

Note that, in the inflated DAG, there are several injectable sets, concretely, all the adjacent pairs of observed nodes in Fig.~\ref{inflation}. Mathematically, this translates into the following compatibility constraints,
\begin{equation}
    \begin{array}{cc}
    Q(a_x,a_o,d_x)=P_{\textsf{2LT}}(a_x,a_o,d_x) \\
    Q(a_x,a_o,c_x,c_y,c_o)=P_{\textsf{2LT}}(a_x,a_o,c_x,c_y,c_o)\\
    Q(b_y,b_o,c_x,c_y,c_o)=P_{\textsf{2LT}}(b_y,b_o,c_x,c_y,c_o)\\
    Q(b_y,b_o,d_y)=P_{\textsf{2LT}}(b_y,b_o,d_y).
    \end{array}
    \label{comp-cons}
\end{equation}

Then, as in $P_{\textsf{2LT}}$ we observe that $a_x=d_x=c_x$, $b_y=c_y=d_y$ and $b_o=c_o$, we only need the marginal probability of $\{A_o,C_x,C_y,C_o\}$ to have all the required information to completely define the entire probability distribution over all the outputs. Therefore, as this marginal is injectable, imposing the compatibility constraints, Eq.~\eqref{comp-cons}, completely determines $Q$. In particular, $Q=P_{\textsf{2LT}}$.

Now, we can simplify our inflation using causal inference rules (the two rules that we use were presented for classical latent nodes in \cite{evans2016graphs} but are easily extendible to nonclassical latent nodes). The first one states that any latent node with only one child can be removed from the DAG without loss of generality, so that we can remove $BCD^2$ and $ACD^2$. The second one says that if two nodes are sharing a bipartite common cause and a tripartite (shared with another node) common cause, the bipartite one can be removed without loss of generality, so that we can remove $ABD^1$ and $ABD^2$. By subsequently focusing only on the marginal probability distribution over $\{A_o,B_o,C_o,D_x,D_y\}$ we end up with the simplified inflation depicted in Fig.~\ref{simple-inflation}.

\begin{figure}[h]
    \centering
    \includegraphics[height = 3cm]{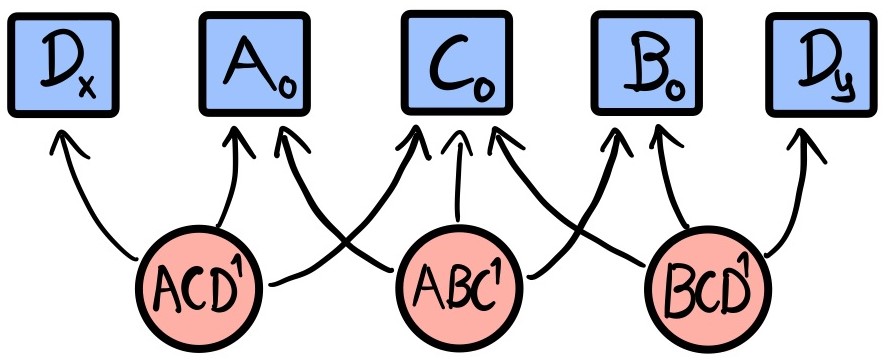}
    \caption{Simplified version of the inflated scenario of the 1-layer tetrahedron.}
    \label{simple-inflation}
\end{figure}

Note that this simplified inflation is observationally equivalent to a 5-observed node scenario which is exactly the same as the \hyperref[noil]{extended Bell scenario} where $(D_x,D_y)$ play the role of $(X,Y)$. This equivalence was pointed out by Fritz in \cite{fritz2012beyond} under the notion of \textit{correlation scenario}. The concrete probability distribution on the simplified inflation is
\begin{align*}
     Q(a_o,b_o,c_o|d_x,d_y)= 
     P_{\textsf{CHSH}>2}(a_o,b_o|d_x,d_y)\delta_{b_o,c_o}.
\end{align*}
Notice that this correlation is the same probability distribution as $P_{\textsf{2-layer}}$ if we match $(d_x,d_y)=(x,y)$. However, we already proved that this distribution is not achievable in Fig.~\ref{5nodes} a) on account of Eq.~\eqref{monogamyprob}. Hence, we have reached a contradiction and, therefore, $P_{\textsf{2LT}}$ is infeasible on the 1-layer tetrahedron.

In summary:
\begin{proposition}\label{prop:tetrahedron_at_all}
    The \hyperref[tetrahedron]{no intermediate node} and \hyperref[2-layer]{$BC$ intermediate node} tetrahedron scenarios are observationally distinct, with the set of distributions explained by the former being strictly contained in the set of distributions explained by the latter.
\end{proposition}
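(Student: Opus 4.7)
The plan is to prove strict inclusion in two stages. The easy direction, containment of the 1-layer tetrahedron in the 2-layer version, follows by letting the intermediate latent $\mu_{BC}$ act trivially (as a transparent wire), so every 1-layer distribution is also 2-layer realizable. The nontrivial direction requires exhibiting a distribution that lies in the 2-layer set but outside the 1-layer set. My candidate is the distribution $P_{\textsf{2LT}}$ constructed by the protocol already described: load $\lambda_{ABC}$ with a CHSH-violating bipartite resource shared between $A$ and the intermediate $\mu_{BC}$, put classically correlated ``GHZ-diagonal'' mixed states on $\lambda_{ACD}$ and $\lambda_{BCD}$ to serve as measurement inputs, and have $\mu_{BC}$ broadcast its CHSH outcome classically to both $B$ and $C$. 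This yields the distribution of Eq.~\eqref{P_2lt}, which exhibits a CHSH violation between the $(A_o,B_o)$ outcomes conditioned on $(A_x,B_y)$, together with perfect $B_o{=}C_o$ agreement.

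To prove infeasibility in the 1-layer tetrahedron I would use Quantum Inflation. Introduce the inflated DAG in Fig.~\ref{inflation}, chosen so that every adjacent pair of observed nodes is an injectable set. Assume, for contradiction, that some model on the 1-layer tetrahedron reproduces $P_{\textsf{2LT}}$; then the injectable-marginal compatibility constraints of Eq.~\eqref{comp-cons} must admit a solution $Q$ on the inflation. The deterministic equalities baked into $P_{\textsf{2LT}}$, namely $a_x{=}c_x{=}d_x$, $b_y{=}c_y{=}d_y$, and $b_o{=}c_o$, propagate through the injectable marginals and force $Q$ to be determined globally from a single injectable marginal, so $Q$ on the relevant variables must equal $P_{\textsf{2LT}}$.

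Next I would simplify the inflation using two graph-reduction rules that preserve observational equivalence: delete latent nodes with a unique child, and delete a bipartite common cause between two nodes that already share a tripartite common cause with another node. Both rules extend from the classical setting of~\cite{evans2016graphs} to the quantum setting because they correspond to absorbing a latent into its sole child or to tracing out a redundant subsystem. Applying them removes $BCD^2$, $ACD^2$, $ABD^1$, and $ABD^2$, and marginalizing over everything except $\{A_o,B_o,C_o,D_x,D_y\}$ leaves the graph in Fig.~\ref{simple-inflation}. This graph is isomorphic to the \hyperref[noil]{exogenized extended Bell scenario} with $(D_x,D_y)$ playing the role of $(X,Y)$, and the induced distribution is precisely $P_{\textsf{2-layer}}$ of Eq.~\eqref{prob}. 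But Proposition~\ref{prop:intermediate_latent_at_all}, witnessed concretely by the monogamy inequality in Eq.~\eqref{monogamyprob}, has already established that $P_{\textsf{2-layer}}$ is infeasible in that scenario. The contradiction completes the proof.

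The step I expect to be the main obstacle is the reduction argument: first, justifying that the two DAG-simplification rules of Evans transfer cleanly to nonclassical latents in this particular inflated graph, and second, making fully rigorous the claim that the deterministic equalities in $P_{\textsf{2LT}}$ force the global joint $Q$ to coincide with $P_{\textsf{2LT}}$ from pairwise injectable marginals alone. Once the simplification is performed, the invocation of Proposition~\ref{prop:intermediate_latent_at_all} and the monogamy witness of Eq.~\eqref{monogamyprob} is immediate.
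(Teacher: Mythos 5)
Your proposal is correct and follows essentially the same route as the paper: the same protocol yielding $P_{\textsf{2LT}}$, the same non-fanout inflation of Fig.~\ref{inflation} with adjacent-pair injectable sets forcing $Q=P_{\textsf{2LT}}$, the same two graph-reduction rules eliminating $BCD^2$, $ACD^2$, $ABD^1$, $ABD^2$, and the same final reduction to the extended Bell scenario where Eq.~\eqref{monogamyprob} gives the contradiction. The only cosmetic difference is that you bill the infeasibility step as ``Quantum Inflation,'' whereas the paper's argument is a direct non-fanout inflation that is in fact theory-agnostic (which is why Prop.~\ref{prop:tetrahedron_at_all} extends to general OPTs).
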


\subsection{Multiple intermediate nodes}

The same question about observational differences among the different 2-layer versions arises when considering the tetrahedron. In contrast to the \hyperref[noil]{extended Bell scenario}, the \hyperref[tetrahedron]{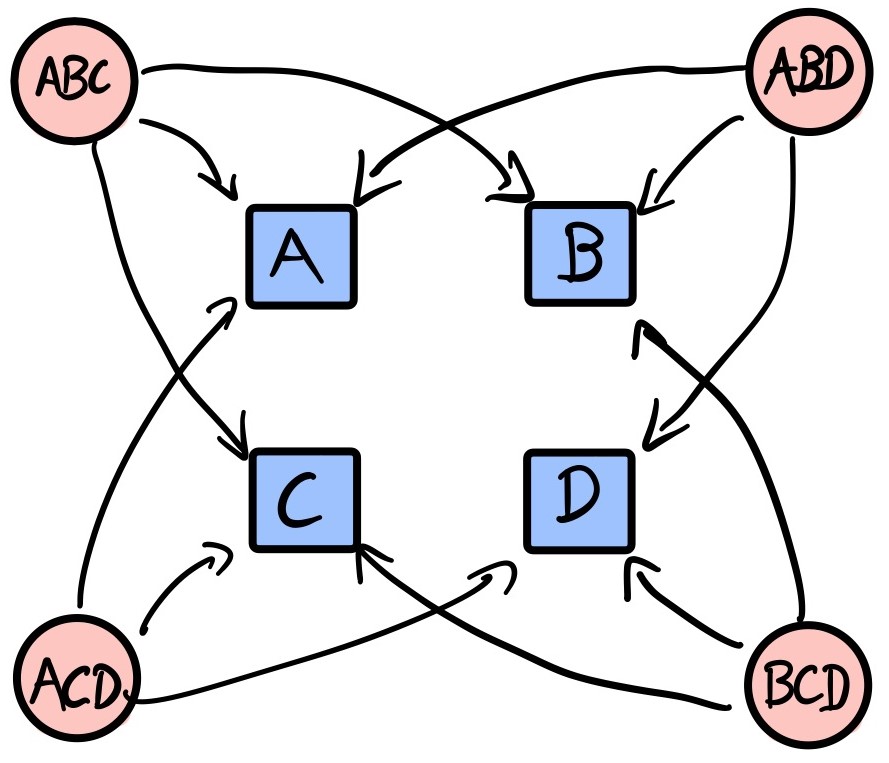} has six possible positions for intermediate latents, and hence there are 64 distinct variants of the tetrahedron we might consider, 63 of which are non-exogenous. Here, we do not answer whether all 64 variants are observationally different. Rather, we delay the answer to that question to future work.\footnote{We believe that all 64 variants of the quantum tetrahedron scenario are observationally distinct.} Nevertheless, we distinguish between \emph{some} variants of the tetrahedron scenario by extending the previous proof employed to differentiate the case of \emph{zero} intermediate latents from the case of \emph{one} intermediate latent ($BC$). 

Note that the previous proof relies on the construction of the inflation in Fig.~\ref{inflation}. Then, one can add as many intermediate nodes (different from $BC$) as one wants while being able to construct a non-fan out inflation with the same injectable sets. Particularly, the possible additions are $BD$, $AD$ and $AC$. Hence, an analogous proof exists to differentiate any version with only a subset of $\{BD,AD,AC\}$ from any other version that performs the $BC$ intermediate node; see Appendix~\ref{versions_of_tetrahedron} for a detailed proof. Then, by relabeling the parties one can show observational differences between more 2-layer versions of the tetrahedron.

\section{Discussion}

We have shown that intermediate latents cannot be ignored when studying the observational consequences of quantum networks, as their presence or absence changes the set of observable statistics which can be explained. We have discussed how these operational implications can be traced back to the lack of a quantum analog of the classical exogenization procedure, due to \emph{No Broadcasting}. We have not attempted to capture \emph{all} the consequences of intermediate latents, but merely to show that for at least \emph{some} correlations they make or break the existence of a quantum causal explanation. Our proofs have relied almost entirely on monogamy of nonlocality. That is, we demonstrate that a Bell inequality can be violated \emph{privately} without involving an intermediate latent. However, the outcomes (for one party) of certain nonlocal correlations can only be replicated by an \emph{additional} party if an intermediate latent is introduced in the appropriate location. A natural question for future research would be to understand if \emph{other} qualitatively distinct differences between correlation sets --- that is, beyond just monogamy of nonlocality --- can also be attributed to the role of intermediate latents.

We have considered two elementary networks where intermediate latent could plausibly make a difference --- namely, the \hyperref[noil]{extended Bell scenario} and the \hyperref[tetrahedron]{tetrahedron scenario} --- finding that they \emph{do} make a difference in both of those networks. Further networks might be considered in future research. Regardless, any future research involving networks where {\bfseries at least two parties (outputs) are in the common causal future of at least least two root nodes (inputs or sources) one or more of which is nonclassical} should include an accounting for intermediate latents (or their absence). Such scenarios arise, for instance, in the study of distributed computing \cite{RenouDistributedComputing}. To some extent such considerations are also prevalent in the study of spacetime embeddings of causal processes~\cite{MatthiasEmbedding, RennerEmbedding}, however spacetime embedding generally treats all inputs as implicitly allowing for a latent common cause. This allows for certain outputs to be independent of certain inputs, but does not express the notion of a \emph{complete} common cause~\cite{QuantumCommonCauses}. As such, the directed acyclic graphs appearing in spacetime embedding literature have a distinct interpretation from those in this work.

All of the questions raised herein regarding the significance of intermediate latents in quantum networks can be revisited for non-quantum (but still nonclassical) physical theories, and for the framework of operational probabilistic theories in general. Some of our proofs here go apply to \emph{all} physical theories with nonlocality, as such theories have inherent no-broadcasting restrictions. Props.~\ref{prop:intermediate_latent_at_all} and~\ref{prop:tetrahedron_at_all} here are examples of such theory-agnostic proofs. Such proofs should be contrasted with those relying on semidefinite programming (such as of Prop.~\ref{prop:intermediate_latent_variants}), which are specific to quantum theory. Some of the initial challenges impeding the generalization of our quantum-specific proofs to other physical theories are elucidated in Appendix~\ref{OPTsec}.

Let us also expose an irony pervasive in our proofs. We have claimed to be studying the significance of quantum intermediate latents, but all of our protocols for generating a correlation that cannot be explained in the \emph{absence} of an intermediate latent only actually leverage the presence of a \emph{classical} intermediate latent. This begs the question: Is the nonclassicality of the intermediate latent actually relevant? We formalize this into a concrete open problem in Appendix~\ref{CvsNC}.

This manuscript aims to inspire a broader research program, as it provides more questions than answers. We have found plenty of examples of pairs of graphically-distinct quantum-latent causal structures (one or both involving intermediate latents) which can be distinguished operationally despite their classical counterparts being equivalent. These examples are meant to highlight the widespread \emph{failure} of the classically-valid exogenization equivalence rule when applied to the quantum landscape. Despite these examples, however, we have also encountered pairs of DAGs, \emph{both} involving intermediate latents, for which the question of their operational distinguishability remains open. An especially salient and challenging example of such a pair of scenarios can be found in Appendix~\ref{app:openquestion} where we discuss why the open problem lead us to speculate about the need to develop a more nuanced generalization of \emph{Quantum Inflation}~\cite{wolfe2021quantum}.

While the concrete open questions posed heretofore provide clear targets for future research, it is also worth stepping back to ask where this field of research is going. Here we have explored how one particular classical equivalence rule --- namely, exogenization --- fails to hold in the quantum paradigm. Which \emph{other} classical equivalences rules fail under paradigm shift? What about classical \emph{in}equivelence rules --- could they fail as well? Does there exist a pair of causal scenarios which are operationally indistinguishable when allowing for quantum latents but where one explains strictly fewer correlations classically than the other?

Additionally, given an exogenous latent \emph{template} --- i.e. an mDAG in the notation of \citet{evans2016graphs} --- what are the different ways to \emph{de-exogenize} it, in general? That is, how can we construct the set of all operationally-distinct quantum processes which map to a particular template under exogenization? The variants of causal structures considered here are such that whenever multiple intermediate latents are present, the intermediate latents are all relatively composed in \emph{parallel}. What about sequential composition of intermediate latents? What is the ``strongest" non-exogenous process that one can write down (in terms of greatest operational explanatory power), given an exogenous template? We endorse the conviction of Tein van der Lugt that lattice theory~\cite{Ganter2024, TeinUnpublished} has great potential to afford insights into these questions.

\section{Acknowledgements}
\noindent We are grateful to María Ciudad Alañón, Marina Maciel-Ansanelli, Tein van der Lugt and Marc-Olivier Renou for fruitful discussions. Research at Perimeter Institute is supported in part by the Government of Canada through the Department of Innovation, Science and Economic Development and by the Province of Ontario through the Ministry of Colleges and Universities.
\vspace{-0.01in}
\setlength{\bibsep}{.15\baselineskip plus .05\baselineskip minus .05\baselineskip}

\nocite{apsrev42Control}
\bibliographystyle{apsrev4-2-wolfe} 
\bibliography{references}
\clearpage
\twocolumngrid

\appendix

\section{Proof of observational difference between different 2-layer versions of the tetrahedron}
\label{versions_of_tetrahedron}
In this appendix, we show the explicit proof for the observational difference between different 2-layer versions of the tetrahedron. Concretely, we show that there is a difference between the \hyperref[2-layer]{version with only the $BC$ intermediate node} and the version with the $AC$, $AD$ and $BD$, see Fig.~\ref{2-layer_ac}. Note that by showing this difference we solve the question between all the versions that contain the $BC$ intermediate and the versions that contain a subset of $\{AC, AD, BD\}$.

\begin{figure}[b]
    \centering
    \includegraphics[height = 3.5cm]{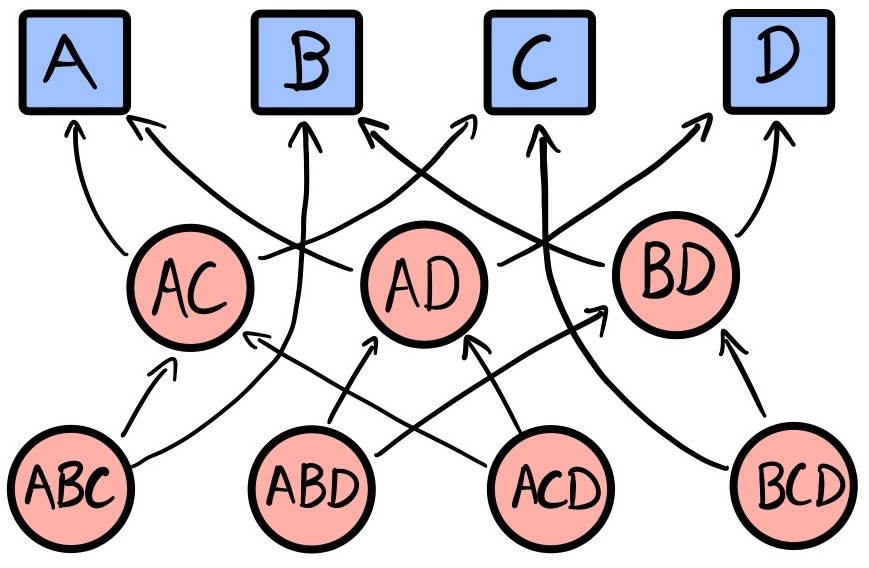}
    \caption{ 2-layer version of the tetrahedron scenario with the $AC, AD, BD$ intermediate latent nodes.}
    \label{2-layer_ac}
\end{figure}

The proof is an extension of the proof for showing the observational difference between the 1-layer and 2-layer tetrahedron provided in the main text. We know that we can achieve the probability distribution $P_{\textsf{2LT}}$ in the version with only the $BC$ intermediate. Now, we can construct an analogous inflation to the one we used before but adding the 3 intermediate nodes $\{AC, AD, BD\}$, see Fig.~\ref{inf_acadbc}. By this construction, we ensure that the adjacents pairs of parties are injectable. So, as in the main text, imposing the compatibility constraints, Eq.~\eqref{comp-cons}, completely determines the probability over the inflation to be $Q=P_{\textsf{2LT}}$.

\begin{figure}[t]
    \centering
    \includegraphics[height = 2.5cm]{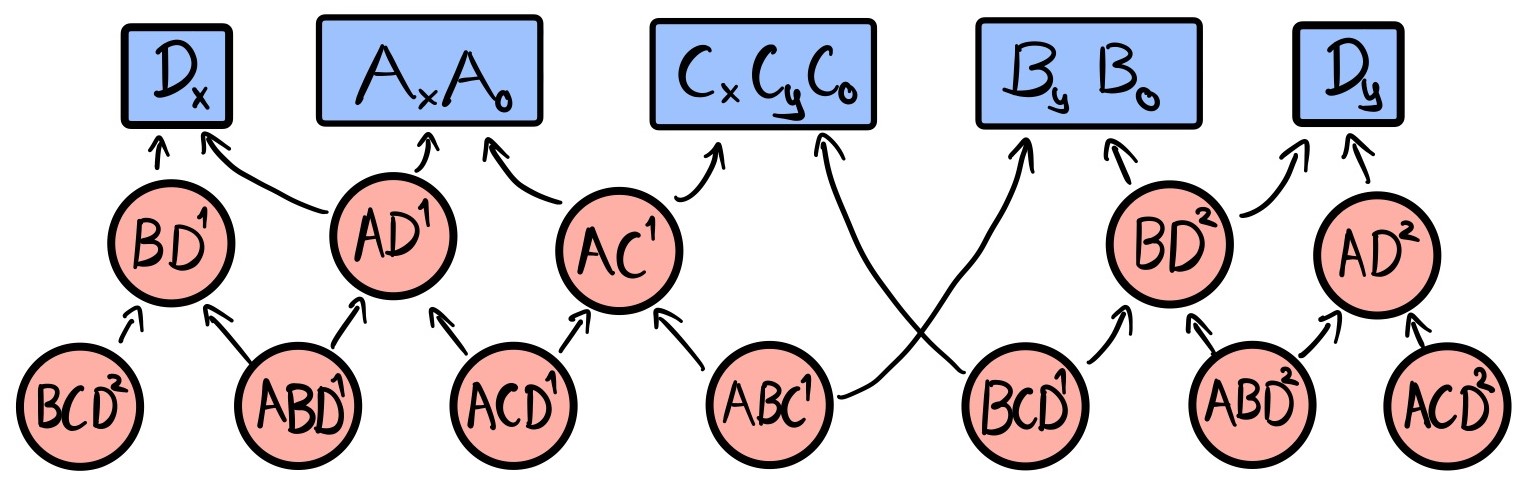}
    \caption{ Inflated scenario of the 2-layer tetrahedron with $AC, AD$ and $BD$ intermediate nodes}
    \label{inf_acadbc}
\end{figure}

Then, we can use causal inference rules plus the remarks we present in the main text about nonclassical exogenization to simplify the inflated scenario. First, we use the first two remarks about nonclassical exogenization to absorb $BCD^1$ and $ACD^2$ into $BD^1$ and $AD^2$ respectively and then those into the sources $ABD^1$ and $ABD^2$, see Fig.~\ref{inf_simp_1}.

\begin{figure}[b]
    \centering
    \includegraphics[height = 2.5cm]{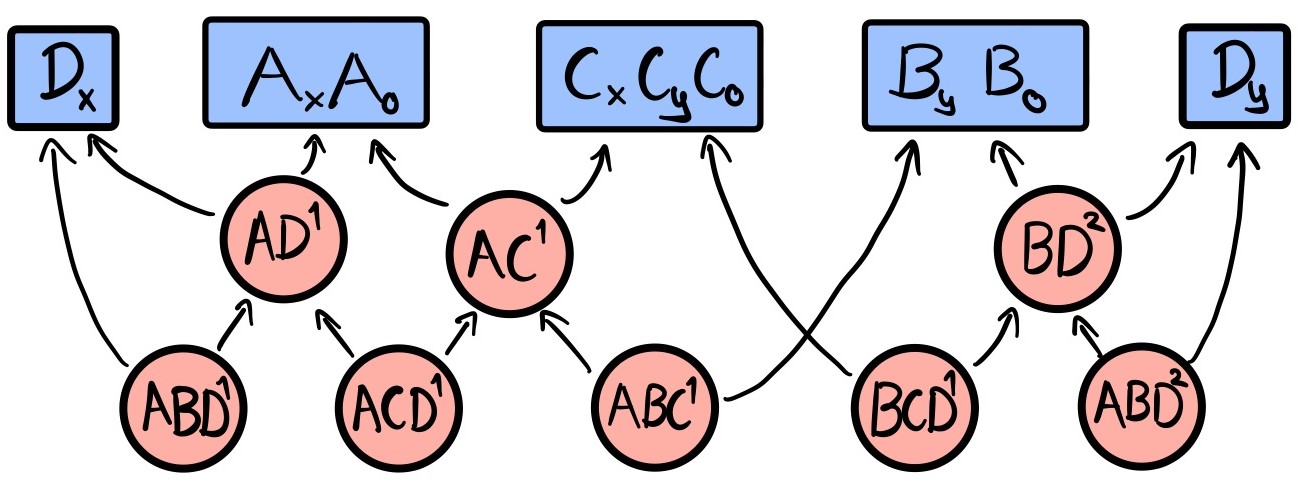}
    \caption{ First simplification of the inflated scenario of the 2-layer tetrahedron with $AC, AD$ and $BD$ intermediate nodes}
    \label{inf_simp_1}
\end{figure}

Then, we can delete the direct arrows from them to $D_x$ and $D_y$ because the subsystems going through the arrows can go as well through the intermediates $AD^1$ and $BD^2$. Finally, we can repeat the first step to absorb the single-child sources $ABD^1$ and $ABD^2$ into the intermediates $AD^1$ and $AD^2$ and then those into $ACD^1$ and $BCD^2$ so that we end up with the scenario of Fig.~\ref{inf_simp_2} where we focus on the marginal distribution over $\{A_o,B_o,C_o,D_x,D_y\}$.

\begin{figure}[t]
    \centering
    \includegraphics[height = 2.5cm]{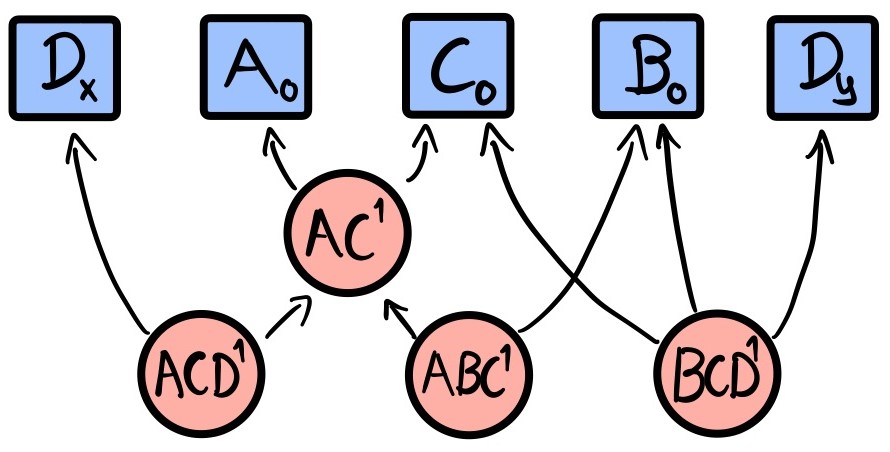}
    \caption{ Second simplification of the inflated scenario of the 2-layer tetrahedron with $AC, AD$ and $BD$ intermediate nodes}
    \label{inf_simp_2}
\end{figure}

Now, note that the simplified inflation is a \textit{correlation scenario} (using the language of Fritz \cite{fritz2012beyond}) of the extended Bell scenario with the $AC$ intermediate node. As the probability distribution over the marginal is 
\begin{align*}
     Q(a_o,b_o,c_o|d_x,d_y)= 
     P_{\textsf{CHSH}>2}(a_o,b_o|d_x,d_y)\delta_{b_o,c_o},
\end{align*}
which is exactly equal to $P_{\textsf{2-layer}}$, we can piggyback on the proof presented in the main text to show that this distribution (in particular choosing the marginal on $\{A_o,B_o|D_x,D_y\}$ to be the Tsirelson box) is not quantumly achievable in the extended Bell scenario with the $AC$ intermediate node by virtue of Eq.~\eqref{eq:monogamy_violation}. Therefore, $P_{\textsf{2LT}}$ (with maximal quantum violation of \textsf{CHSH}) is not achievable in the 2-layer version of the tetrahedron with $\{AD,BD,AC\}$ intermediates. Notice that the proof of infeasibility is not a usual proof by inflation. It is not coming directly through the PSD completion problem proposed by the inflation but rather the relation of the inflation with the extended Bell scenario with the $AC$ intermediate so that we can directly use our knowledge about it. Also note that if one does not include the $AC$ intermediate node, there would be no necessity to piggyback on the extended Bell scenario with $AC$ but rather on the exogenous extended Bell scenario.

Finally, it is natural to go further and ask why are not we extending the proof for the tetrahedron version with all the intermediate nodes except $BC$. The answer is that when one adds the $AB$ or $CD$ intermediates, the inflation necessarily becomes a \emph{fanout} inflation (if one wants to preserve the injectability of the adjacent parties, which is key in our proof). This impedes directly piggybacking on our existing results. However, the study of causal models wherein one treats latent nodes as quantum systems \emph{does} admit analysis in terms of fanout inflation graphs via the semidefinite programming hierarchy of \emph{Quantum Inflation}~\cite{wolfe2021quantum}.\footnote{While \emph{Quantum Inflation}~\cite{wolfe2021quantum} was introduced as a numerical tool, it turns out that the underlying PSD-completion problem can be tackled entirely analytically in the special case a set of not-all-commuting measurements can be arranged in a sequence such that adjacent measurement pairs in the sequence not only commute but also are give rise to perfectly correlated operational statistics. The analyses required to prove the relevance of additional intermediate latents happen to fit this special form, such that we can indeed effectively extend the results in this manuscript to nonfanout inflations. The authors defer addressing these technical details to forthcoming (untitled) work in collaboration with Marc-Olivier Renou, Antoine Coquet, and Lucas Tendick.}

\section{Generalization to Operational Probabilistic Theories}
\label{OPTsec}
We shall consider the same questions we have answered for quantum theory under the perspective of Operational Probabilistic Theories (OPTs). Concretely, we will use the term OPT to refer to the most general Operational Probabilistic Theory that satisfies causality (i.e. no-signaling constraints and source independence) and device-replication \cite{chiribella2010probabilistic, henson2014theory, gisin2020constraints}.

As stated in the main text, the proof to show the observational difference between the \hyperref[noil]{exogenized scenario} and the  \hyperref[BCil]{BC intermediate node scenario} is valid as well when considering the set of OPT correlations. So, the significance of having multiple layers of nonclassical latent nodes extends to post-quantum theories.

On the other hand, the question of whether the different 2-layer versions (\hyperref[BCil]{BC intermediate node scenario}, \hyperref[ACil]{AC intermediate node scenario} and \hyperref[2il]{both intermediate nodes scenario}) coming from the \hyperref[noil]{extended Bell scenario} lead to different sets of correlations remains opened when considering OPT correlations. The explanation of why the proof of observational difference that we derived for quantum theory does not follow for OPTs is the following. In order to show the difference for quantum theory we use the asymmetric task of $C$ copying $B$ while $A$ and $B$ violate \textsf{CHSH}. As we have seen, within quantum theory, it is possible to succeed in the task only if the $BC$ intermediate node is present. However, when one considers OPT correlations, it happens that one can achieved the task with the unique presence of the $AC$ intermediate node. The reason is that the extremal point of the OPT set of correlations (for $(AB|XY)$) is the PR box probability distribution \cite{popescu1994quantum} which allows one to compute the output of Bob (Alice) with certainty if one has access to both settings and the output of Alice (Bob). Hence, by using the $AC$ intermediate node to perform the measurement, $C$ knows the value of both settings and Alice's output and therefore he can copy Bob's output while $A$ an $B$ maximally violate the \textsf{CHSH} inequality.

Regarding the \hyperref[tetrahedron]{tetrahderon scenario}, the observational difference between the exogenous and non-exogenous versions is still valid as well for OPT correlations as we derived it using only non-fan out inflation and the previous results on the extended Bell scenario (that are valid for OPTs). Then, for the question of whether the different 2-layer versions of the tetrahderon are observationally equivalent or not, the results that hold for OPT correlations are only those which piggyback on the knowledge about the extended Bell scenario (because as we just stated, the proofs for differentiating the Bell scenarios with only one intermediate node are not OPT valid). Concretely, regarding OPTs, the versions of the 2-layer tetrahedron that present the $BC$ intermediate are observationally different from the versions that contain a subset of $\{AD,BD\}$. (Then, by relabelling the parties, one can do analogous proofs to show observational difference between other versions).

\section{Classical vs quantum intermediate latent nodes}\label{CvsNC}
In this section we discuss the relevance of the intermediate latent's nature (classical or quantum) at the observational level, i.e. which probability distributions are achievable within the causal scenario. In particular, we present a simple example to show that the first step of the exogenization procedure (i.e. add direct causal influences (arrows) from the
parents of the intermediate latent node to the
children of it) is not valid when considering classical intermediate latent nodes. 

\begin{figure}[b]
     \centering
     \subfloat[label1][\linebreak Original scenario\label{noarrows}]{\includegraphics[height = 3.5cm]{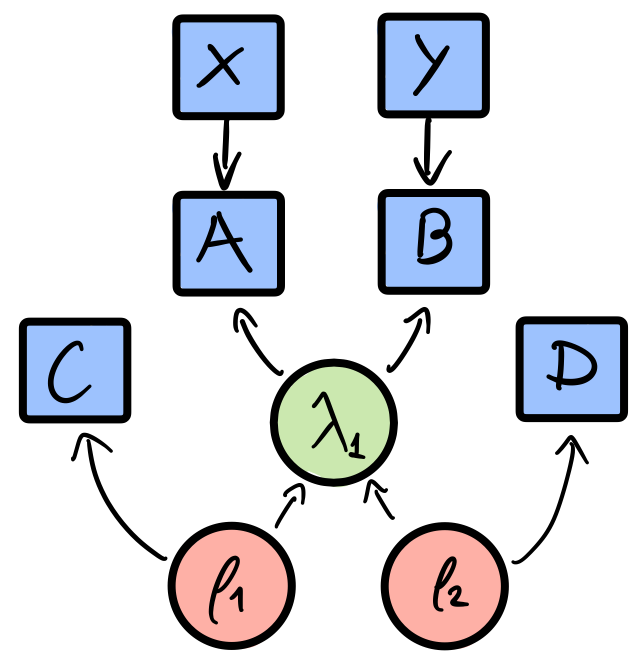}}
     \subfloat[label2][\linebreak Scenario after step 1 of exogenization\label{arrows}]{\includegraphics[height = 3.5cm]{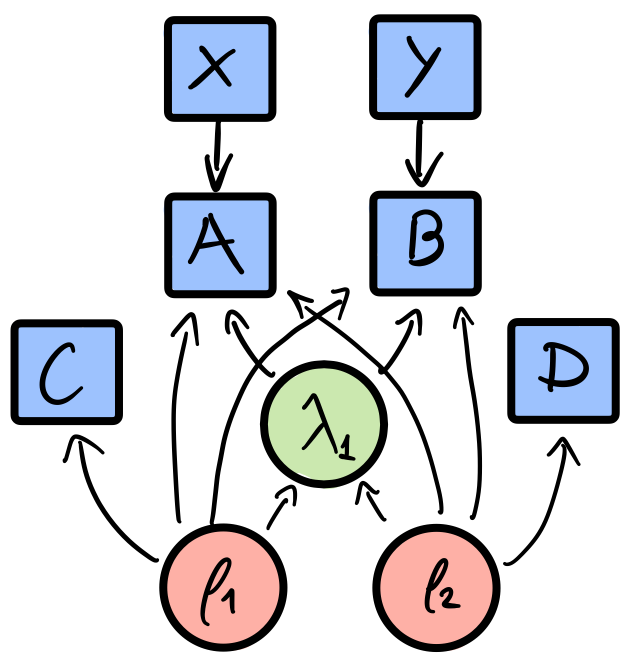}}
        \caption{Scenario with a classical intermediate node}
        \label{arrowpassthrough}
\end{figure}
Consider a DAG in which the intermediate latent has 2 children with independent parents (or settings), see Fig.~\ref{noarrows}. There, it is clear that the classicality of the intermediate node avoids the possibility of observing a Bell inequality violation on the marginal distribution of $(AB|XY)$. However, when we add the direct arrows from the nonclassical sources to the children of the intermediate node, Fig.~\ref{arrows}, one can share entanglement between $A$ and $B$ and therefore violate a Bell inequality. Hence, the scenarios before and after the first step of the exogenization procedure are not observationally inequivalent making this step invalid for DAGs which perform classical intermediate nodes with nonclassical parents.
From the example above it is easy to see that the first step of exogenization is not valid in general when the intermediate is classical. 

\begin{figure}[t]
     \centering
     \subfloat[][\linebreak nonclassical $BC$ intermediate node scenario\label{BCil_appendix}]{\includegraphics[height = 3.5cm]{images/5nodes_b.png}}
     \subfloat[label2][\linebreak Classical $BC$ intermediate node with direct arrows from the source\label{classic_bc2}]{\includegraphics[height = 3.5cm]{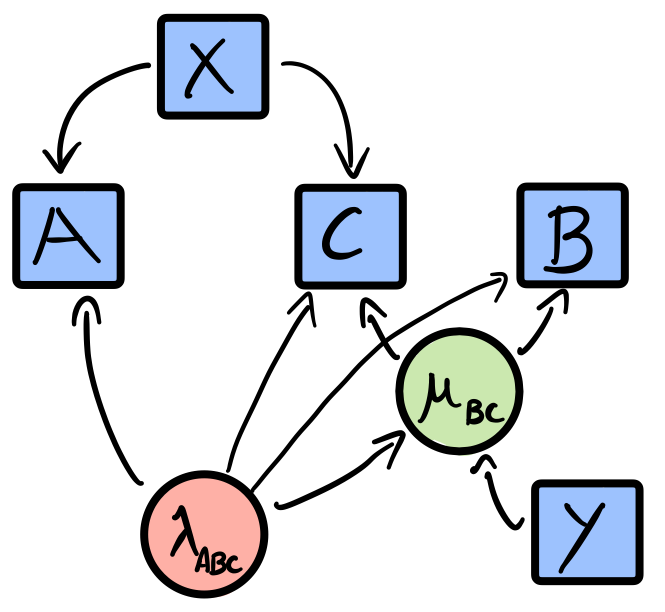}}\\
     \subfloat[label3][\linebreak Classical $BC$ intermediate node without direct arrows from the source\label{classic_bc1}]{\includegraphics[height = 3.5cm]{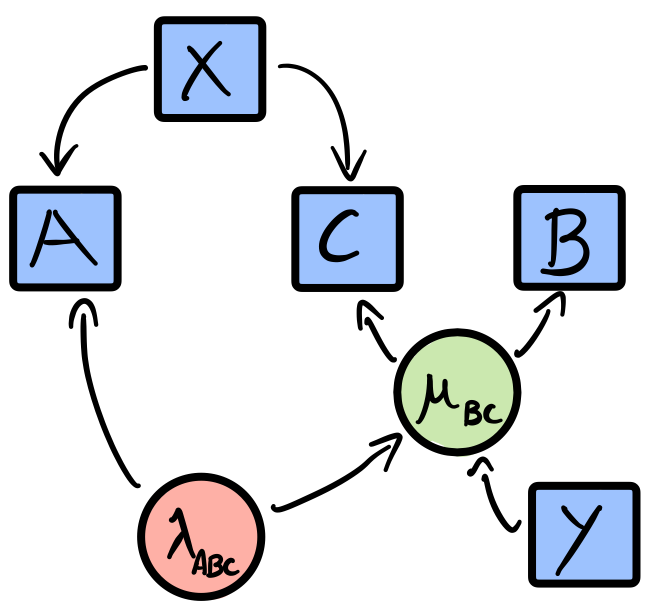}}

        \caption{Scenarios with $BC$ intermediate node}
        \label{openquestion}
\end{figure}
On the other hand, there are many scenarios where it is \emph{far from clear} if the physical nature of the intermediate node leads to an operational difference or not. In particular, consider a scenario in which the first step of the exogenization procedure \emph{has already been performed}. That is, lets consider a scenario where there \emph{are} arrows from all parents of an intermediate latent to all children of the intermediate latent. In such a scenario, does it make a difference if the intermediate latent is classical or quantum? None of our proofs appear to address this question. The lack of clarity is compounded when we appreciate that in all of the scenarios presented in this text, \emph{all} the protocols to show observational difference arising from an intermediate latent leverage correlations that can be realized even when only a \emph{classical} intermediate latent node is present!

Let us use the extended Bell scenario with the $BC$ intermediate (Fig.~\ref{openquestion}) to make these questions more explicit. Is there an observational difference between the scenarios depicted in Figs.~\ref{BCil_appendix} and~\ref{classic_bc2}? For that matter, what about between Figs.~\ref{BCil_appendix} and~\ref{classic_bc1}? (Note that in terms of the sets of correlations each scenario can explain, we certainly have ${\text{\ref{classic_bc1}} \subseteq \text{\ref{classic_bc2}} \subseteq \text{\ref{BCil_appendix}}}$.)

\section{How to distinguish more complicated quantum multilayer scenarios?}\label{app:openquestion}

The purpose of this appendix is to throw down the gauntlet for future research by calling attention to an especially tantalizing open question. The two causal structures in Fig.~\ref{structure-il} both involve intermediate latents. One using a trio of three intermediate latents in parallel, each having two children. The other utilizes a single intermediate latent with three observable children. The question is simply: \textbf{Are these two DAGs observationally equivalent or not?}

\begin{figure}
     \centering
     \subfloat[label1][\linebreak three bipartite intermediates scenario\label{bipartites-il}]{\includegraphics[height = 3.5cm]{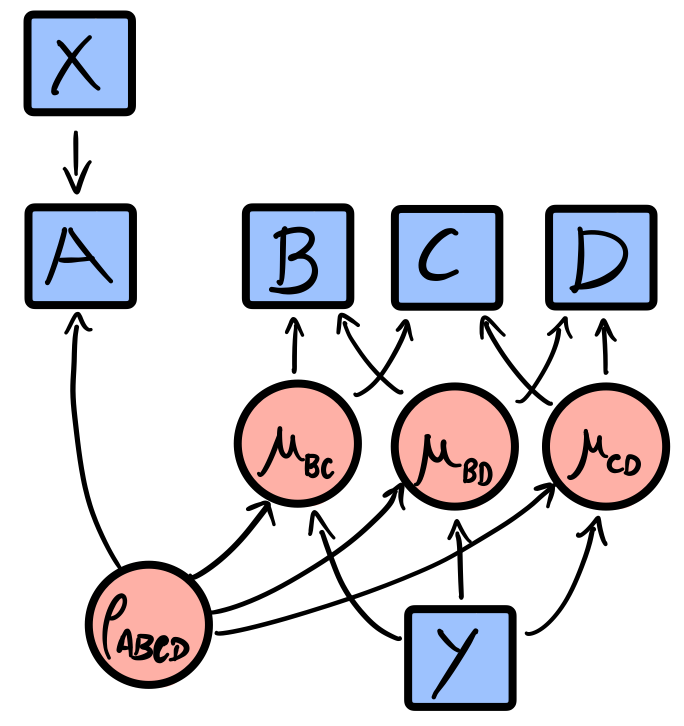}}
     \subfloat[label2][\linebreak tripartite intermediate node scenario\label{tripartite-il}]{\includegraphics[height = 3.5cm]{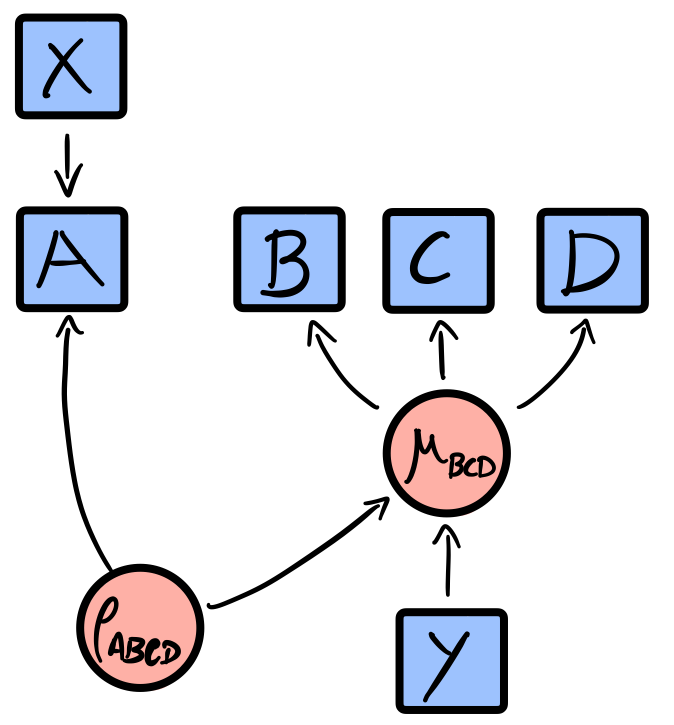}}
        \caption{Causal scenarios with different intermediate layer structures.}
        \label{structure-il}
\end{figure}

Following the previous proofs leveraging monogamy of nonlocality, we conjecture that these two variant causal structure might be distinguished by considering the probability distribution
\begin{equation}\label{eq:three agree}
    P(a,b,c,d|x,y)=P_{CHSH>2}(a,b|x,y)\delta_{b=c=d}\,.
\end{equation}
On the one hand, the correlation of Eq.~\eqref{eq:three agree} is readily realizable in the \hyperref[tripartite-il]{single-tripartite intermediate scenario} by a protocol analogous to the one proposed in Sec.~\ref{obsdiff}. Namely, let the measurement that Bob \emph{would} perform to violate the CHSH inequality \emph{actually} be performed in the intermediate latent node. The resulting classical value is then forwarded to all of its children, namely $(B, C, D)$. 

On the other hand, we cannot conceive of any protocol following the process of the \hyperref[bipartites-il]{three-bipartite intermediates scenario} which would lead to the correlation of Eq.~\eqref{eq:three agree}. While the absence of a construction could be an artifact of our limited imagination, our intuition is that it should be possible to certify causal incompatibility here using some generalization of \emph{Quantum Inflation}~\cite{wolfe2021quantum}  a la Sec.~\ref{versions-section}. However, if one limits their considerations to moment matrices generated by sequences of \emph{individually injectable} operators --- as we did in Sec.~\ref{versions-section} here --- one finds that the two scenarios in Fig.~\ref{structure-il} would admit the same (non)commutation relations. Consequently, we conclude that such (naturally) restricted SDPs are incapable of distinguishing those two scenarios. We speculate this to be a bug, not a feature. That is, we take this open problem as indicating the need to develop a more nuanced generalization of \emph{Quantum Inflation}~\cite{wolfe2021quantum}. Speculation aside, both possible answers to the open problem are presently plausible.

\section{Entropic monogamy relations}
\label{entropic_relations}

Notice that the monogamy relations presented in the main text are of the same flavour as the monogamy relations of \cite{augusiak2014elemental}. In this appendix we present monogamy relations in terms of entropic quantities, as per those in Ref.~\cite{fritz2012beyond}. The idea behind the Fritz's type monogamy relations is the causal dependency, i.e. the level of correlations between a subset of parties that share a common causal past limits the amount of correlation they can have with other nodes that do not share the same causal past. On the other hand, the idea behind the type of monogamy relations presented here is monogamy of entanglement, i.e. the amount of nonlocality observed by a subset of the children of a multipartite source bounds the level of correlations they share with the rest of the children.

Although the monogamy relations which exploit monogamy of entanglement have been well investigated, they have not been explored using entropic relations whereas the Fritz's type were introduced directly from that perspective. Here, we present the first entropic monogamy relations that rely on monogamy of entanglement. This allows us to obtain some monogamy relations regardless of the number of outputs of each player.

In order to find the entropic monogamy relations of this kind we must use entropic Bell inequalities. The concept of entropic Bell inequalities was first developed by \citet{braunstein1988information} who introduced the entropic inequality equivalent to the \textsf{CHSH} inequality. (See Ref.~\cite{vilasini2020limitations} for an excellent review.) There are four Braunstein-Caves inequalities, one for each relabelling of the settings. One of them is
\begin{equation}
\textsf{BC}_{AB}\coloneqq \begin{psmallmatrix}
    H(A_0B_0)+H(A_0B_1)+H(A_1B_0)\\
    -H(A_1 B_1)-H(A_0)-H(B_0)
\end{psmallmatrix}
\geq 0.
    \label{BC1}
\end{equation}
where $A_X$ and $B_Y$ are the observed variables given a certain value of their settings.

Now, we shall use them to obtain explicit entropic monogamy relations which exploit monogamy of entanglement. Let us come back to the simple example of the \hyperref[noil]{extended Bell scenario} but now we allow for any number of outputs in every player while keeping binary settings. We depict the interrupted scenario where $C$ receives different values of $X$ and $Y$ in Fig.~\ref{intervened_xy}.

\begin{figure}[t!]
    \centering
    \includegraphics[height = 3.5cm]{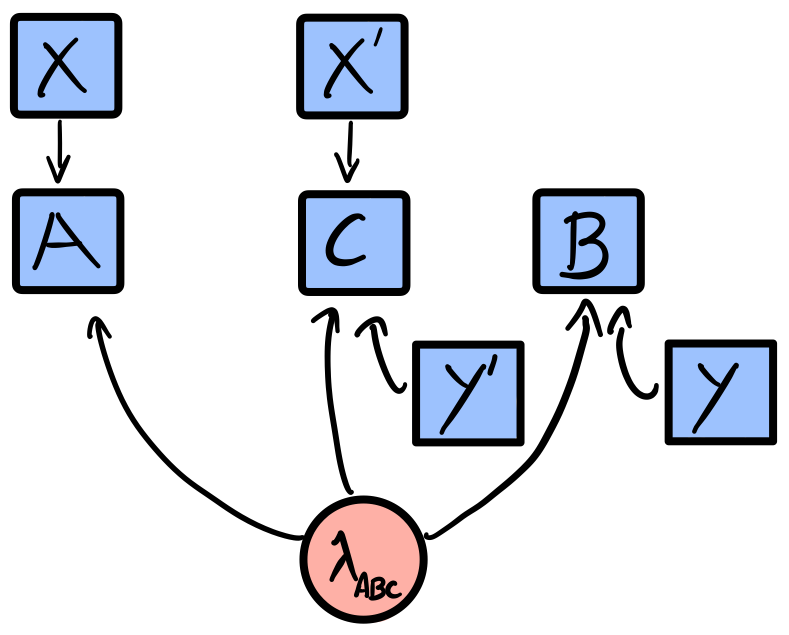}
    \caption{Interrupted graph of the \hyperref[noil]{exogenized scenario} with different values of $X$ and $Y$ for $C$.}
    \label{intervened_xy}
\end{figure}

We then apply the properties of Shannon entropy, monotonicity, strong subadditivity and the chain rule to the variables of the different coexisting sets.\footnote{A coexisting set of systems is one for which
a joint state can be defined \cite{weilenmann2017analysing, chaves2015information}} The maximal coexisting sets in our \hyperref[intervened_xy]{interrupted scenario} are of the form $\{A_X,B_Y,C_{X'Y'}\}$ where $X, Y, X', Y' \in \{0,1\}$. Finally, we project onto the space of coexisting sets in the  \hyperref[noil]{original scenario} by eliminating terms which involve entropies of sets in which $X\neq X'$ or $Y\neq Y'$ using Fourier-Motzkin elimination. Concretely, we sum the four inequalities coming from the positivity of conditional entropy and conditional mutual information
\begin{equation}
    \begin{array}{cc}
       H(A_0|B_0C_{00}) +  I(B_0,C_{00}|A_0) \geq 0,  \\
       H(A_0|B_1C_{00}) +  I(B_1,C_{00}|A_0) \geq 0, \\
       H(B_0|A_1C_{00}) + I(A_1,C_{00}|B_0) \geq 0, \\
       H(C_{00}|B_1A_1) + I(A_1,B_1|C_{00}) \geq 0
    \end{array}
\end{equation}
to obtain the entropic inequality
\begin{equation}
\begin{split}
H(A_0B_0)+H(A_0B_1)+H(A_1B_0)\\
-H(A_1 B_1)-H(A_0)-H(B_0)\\
+2H(A_0C_{00}) - H(A_0) - H(C_{00})\geq 0
\end{split}
\end{equation}
which can be read as 
\begin{equation}\label{eq:entropic_monogamy}
    \textsf{BC}_{AB} + H(A_0|C_{00}) + H(C_{00}|A_0)\geq 0
\end{equation}
where $BC_{AB}$ is all the terms of the Braunstein-Caves inequality~\eqref{BC1} for the parties $A$ and $B$.

This entropic inequality~\eqref{eq:entropic_monogamy} is a monogamy relation because it reflects the trade-off between the violation of the entropic Bell inequality~\eqref{BC1} and the degree of correlation between two parties, with only one of them involved in the prior Bell inequality. Note that whenever $A_0$ and $C_{00}$ are perfectly correlated we have $H(A_0|C_{00}) + H(C_{00}|A_0)=0$, and hence that $\textsf{BC}_{AB}\geq 0$. That is, no violation of the Braunstein-Caves inequality is possible while having $A$ perfectly correlated with $C$ for setting $X=0$. On the other hand, a violation of $\textsf{BC}_{AB}$, i.e. $\textsf{BC}_{AB}< -\epsilon$, correspondingly imposes that $H(A_0|C_{00}) + H(C_{00}|A_0)>\epsilon$. Thus, any violation of the Braunstein-Cave inequality implies that $A$ and $C$ cannot be perfectly correlated.

We can readily derive related entropic inequalities by merely considering different values of the settings,
\begin{equation*}
    \textsf{BC}_{AB} +H(A_X|C_{XY}) + H(C_{XY}|A_X)\geq 0
\end{equation*}
for $X,Y \in \{0,1\}$, or by invoking the underlying symmetry $A\leftrightarrow B$ to obtain
\begin{equation*}
    \textsf{BC}_{AB} +H(B_Y|C_{XY}) + H(C_{XY}|B_Y)\geq 0\,.
\end{equation*}

Notice that the 2-layer versions of the extended Bell scenario can clearly violate these monogamy inequalities (respectively). Although, the Braunstein-Caves inequality is not violated by the Tsirelson box, it can be violated by a quantum protocol that leads to a mixture of the Tsirelson box and a local distribution. Then, as the 2-layer versions can perform this quantum protocol measuring in the intermediate node and forwarding the outcome to $C$ as well as to $A$ ($B$), we find that $C$ can be perfectly correlated with $A$ ($B$) and still have $\textsf{BC}_{AB}<0$.

\end{document}